\newcommand{\ceiling}[1]{\left\lceil{#1}\right\rceil}
\newcommand{\floor}[1]{\left\lfloor{#1}\right\rfloor}
\newcommand{\setof}[1]{\left\{{#1}\right\}}
\newcommand{\frameworkku}[1]{$\mathbf{k^2U}$}
\newcommand{\frameworkkq}[1]{$\mathbf{k^2Q}$}
 \def\myendproof{{\ \vbox{\hrule\hbox{%
   \vrule height1.3ex\hskip0.8ex\vrule}\hrule }}\par}
 \renewenvironment{proof}{\noindent{\bf Proof. }}{\myendproof}
\newtheorem{theorem}{Theorem}
\newtheorem{lemma}{Lemma}
\newtheorem{corollary}{Corollary}
\newtheorem{definition}{Definition}
\tikzstyle{materia}=[draw, fill=white, text width=1.0em, text centered,
\tikzstyle{practica} = [materia, text width=18em, minimum width=8em,
\tikzstyle{texto} = [above, text width=6em, text centered]
\tikzstyle{linepart} = [draw, thick, color=blue!50, -latex', dashed]
\tikzstyle{line} = [draw, line width = 2pt, color=blue!50, -latex']
\tikzstyle{ur}=[draw, text centered, minimum height=0.01em]
\title{Automatic Parameter Derivations in k2U Framework}
\author{
    Jian-Jia Chen and Wen-Hung Huang\\
    Department of Informatics\\
    TU Dortmund University, Germany
    \and
    Cong Liu\\
    Department of Computer Science\\
    The University of Texas at Dallas
}
\begin{document}

\maketitle

\begin{abstract}
  We have recently developed a general schedulability test framework,
  called \frameworkku{}, which can be applied to deal with a large
  variety of task models %in different application domains 
  that have been widely studied in
  real-time embedded systems.  The \frameworkku{} framework provides
  several means for the users to convert arbitrary schedulability
  tests (regardless of platforms and task models) into polynomial-time
  tests with closed mathematical expressions. However, the
  applicability (as well as the performance) of the \frameworkku{}
  framework relies on the users to index the tasks properly and define
  certain constant parameters.

  This report describes how to automatically index the tasks properly and derive those
  parameters. We will cover several typical schedulability tests in
  real-time systems to explain how to systematically and automatically
  derive those parameters required by the \frameworkku{}
  framework. This automation significantly empowers the \frameworkku{}
  framework to handle a wide range of classes of real-time execution
  platforms and task models, including uniprocessor scheduling,
  multiprocessor scheduling, self-suspending task systems, real-time
  tasks with arrival jitter, services and virtualizations with
  bounded delays, etc.  
  % More precisely, if the corresponding
  % exponential time or pseudo-polynomial-time schedulability test is
  % in one of the classes provided in this report, the derivations of
  % the hyperbolic-form schedulability tests, utilization-based
  % analysis, etc. can be automatically constructed.
\end{abstract}

\section{Introduction}

To analyze the worst-case response time or to ensure the timeliness of
the system, for each of individual task and platform models, researchers tend to develop
dedicated techniques that result in schedulability tests with
different time/space complexity and accuracy of the
analysis.
A very widely adopted case is the schedulability test of a
(constrained-deadline) sporadic real-time task $\tau_k$ under
fixed-priority scheduling in uniprocessor systems, in which the
time-demand analysis (TDA) developed in
\cite{DBLP:conf/rtss/LehoczkySD89} can be adopted. That is, if
\begin{equation}
  \label{eq:exact-test-constrained-deadline}
\exists t \mbox{ with } 0 < t \leq D_k {\;\; and \;\;} C_k +
\sum_{\tau_i \in hp(\tau_k)} \ceiling{\frac{t}{T_i}}C_i \leq t,
\end{equation}
then task $\tau_k$ is schedulable under the fixed-priority scheduling algorithm, where $hp(\tau_k)$ is the set of tasks with higher priority than $\tau_k$, $D_i$, $C_i$, and $T_i$ represent $\tau_i$'s relative deadline, worst-case execution time, and period, respectively. TDA requires pseudo-polynomial-time complexity to check the time points that lie in $(0, D_k]$ for Eq.~\eqref{eq:exact-test-constrained-deadline}.  The utilization $U_i$ of a sporadic task $\tau_i$ is defined as $C_i/T_i$.

However, it is not always necessary to test all possible time points
to derive a safe worst-case response time or to provide sufficient
schedulability tests.
The general and key concept to obtain sufficient schedulability tests in
\frameworkku{} in \cite{DBLP:journals/corr/abs-1501.07084} and
\frameworkkq{} in \cite{DBLP:journals/corr/abs-k2q}  is to test only a
subset of such points for verifying the schedulability. 
Traditional fixed-priority schedulability tests often have
pseudo-polynomial-time (or even higher) complexity. 
The idea
implemented in the \frameworkku{} and \frameworkkq{} frameworks  is to 
provide a general $k$-point schedulability test, which
only needs to test $k$ points under \textit{any} fixed-priority
scheduling when checking schedulability of the task with the $k^{th}$
highest priority in the system.  
Suppose that there are $k-1$ higher priority tasks, indexed as $\tau_1, \tau_2, \ldots, \tau_{k-1}$, than task $\tau_k$.
The success of the \frameworkku{} framework is based on a 
$k$-point effective schedulability test, defined as follows:
\begin{definition}[Chen et al. \cite{DBLP:journals/corr/abs-1501.07084}]
  \label{def:kpoints-k2u}
  A $k$-point effective schedulability test is a sufficient schedulability test of a fixed-priority scheduling policy, that verifies the existence of $t_j \in \setof{t_1, t_2, \ldots t_k}$ with $0 < t_1 \leq t_2 \leq \cdots \leq t_k$ such that \begin{equation}
    \label{eq:precodition-schedulability-k2u}
    C_k + \sum_{i=1}^{k-1} \alpha_i t_i U_i + \sum_{i=1}^{j-1} \beta_i t_i U_i \leq t_j,
  \end{equation}
  where $C_k > 0$, $\alpha_i > 0$, $U_i > 0$, and $\beta_i >0$ are dependent upon the setting
  of the task models and task $\tau_i$. \myendproof
\end{definition}

The \frameworkku{} framework \cite{DBLP:journals/corr/abs-1501.07084}
assumes that the corresponding coefficients $\alpha_i$ and $\beta_i$
in Definition~\ref{def:kpoints-k2u} are given. How to derive them
depends on the task models, the platform models, and the scheduling
policies.  Provided that these coefficients $\alpha_i$, $\beta_i$,
$C_i$, $U_i$ for every higher priority task $\tau_i$ are given, the
\frameworkku{} framework can find the worst-case assignments of the
values $t_i$ for the higher-priority tasks $\tau_i$.

Although several applications were adopted to demonstrate the power
and the coverage of the \frameworkku{} framework, we were not able to
provide an automatic procedure to construct the required coefficients
$\alpha_i$ and $\beta_i$ in Definition~\ref{def:kpoints-k2u} in
\cite{DBLP:journals/corr/abs-1501.07084}. Instead, we stated in
\cite{DBLP:journals/corr/abs-1501.07084} as follows:
\begin{quote}\emph{
  The choice of good parameters $\alpha_i$ and $\beta_i$ affects the
  quality of the resulting schedulability bounds. .....  However, deriving
  the \emph{good} settings of $\alpha_i$ and $\beta_i$ is actually not
  the focus of this paper. The framework does not care how the
  parameters $\alpha_i$ and $\beta_i$ are obtained. The framework
  simply derives the bounds according to the given parameters
  $\alpha_i$ and $\beta_i$, regardless of the settings of $\alpha_i$
  and $\beta_i$. The correctness of the settings of $\alpha_i$ and
  $\beta_i$ is not verified by the framework.}
\end{quote}

{\bf Contributions:} This report explains how to automatically derive
those parameters needed in the \frameworkku{} framework.  We will
cover several typical schedulability tests in real-time systems to
explain how to systematically and automatically derive those
parameters required by the \frameworkku{} framework. This automation
significantly empowers the \frameworkku{} framework to handle a wide
range of classes of real-time execution platforms and task models,
including uniprocessor scheduling, multiprocessor scheduling,
self-suspending task systems, real-time tasks with arrival jitter,
services and virtualizations with bounded delays, etc.  More
precisely, if the corresponding (exponential time or
pseudo-polynomial-time) schedulability test is in one of the classes
provided in this report, the derivations of the hyperbolic-form
schedulability tests, utilization-based analysis, etc. can be
automatically constructed.

Given an arbitrary schedulability test, there are many
ways to define a corresponding k-point effective schedulability test.
The constructions of the coefficients in this report may not be the
best choices. All the constructions in this
report follow the same design philosophy: \emph{We first identify the
  tasks that can release at least one more job at time $0 < t < D_k$
  in the schedulability test and define the effective test point of
  such a task at its last release before $D_k$.} There may be other
more effective constructions for different schedulability tests. These
opportunities are not explored in this report.

\noindent\textbf{Organizations.} 
The rest of this report is organized
as follows:
\begin{compactitem}
\item The basic terminologies and models are presented in
  Section~\ref{sec:model}. 
\item We will present three classes of applicable schedulability tests,
which can allow automatic parameter derivations:
\begin{compactitem}
\item {\bf Constant inflation} in Section~\ref{sec:constant-inflation}: This class covers a wide range of
  applications in which the workload of a higher-priority task may
  have a constant inflation to quantify the additional workload in the
  analysis window.
\item {\bf Bounded-delayed service} in Section~\ref{sec:different-service}: This class covers a wide range of
  applications in which the computation service provided to the task
  system can be lower bounded by a constant slope with a constant
  offset.
\item {\bf Arrival jitter} in Section~\ref{sec:jitter}: This class covers a wide range of
  applications in which a higher-priority task may have arrival jitter
  in the analysis window.
\end{compactitem}
\end{compactitem}
Please note that we will not specifically explain how to use the
\frameworkku{} framework in this report. Please refer
to \cite{DBLP:journals/corr/abs-1501.07084} for details. However, for completeness
the key lemmas in \cite{DBLP:journals/corr/abs-1501.07084} will be
summarized in Section~\ref{sec:model}.

\section{Models and Terminologies}
\label{sec:model}

% For completeness, we present a short summary of the models and
% terminologies of the \frameworkku{} framework
% \cite{DBLP:journals/corr/abs-1501.07084}.
\subsection{Basic Task and Scheduling Models}

This report will introduce the simplest settings by using the ordinary
sporadic real-time task model, even though the frameworks target at
more general task models. 
We define the terminologies here for
completeness.  A sporadic task $\tau_i$ is released repeatedly, with
each such invocation called a job. The $j^{th}$ job of $\tau_i$,
denoted $\tau_{i,j}$, is released at time $r_{i,j}$ and has an
absolute deadline at time $d_{i,j}$. Each job of any task $\tau_i$ is
assumed to have $C_i$ as its worst-case execution time.  The response
time of a job is defined as its finishing time minus its release
time. Associated with each task $\tau_i$ are a period $T_i$, which
specifies the minimum time between two consecutive job releases of
$\tau_i$, and a deadline $D_i$, which specifies the relative deadline
of each such job, i.e., $d_{i,j}=r_{i,j}+D_i$. The worst-case response
time of a task $\tau_i$ is the maximum response time among all its
jobs.  The utilization of a task $\tau_i$ is defined as $U_i=C_i/T_i$.

A sporadic task system $\tau$ is said to be an implicit-deadline
task system if $D_i = T_i$ holds for each $\tau_i$. A sporadic task system
$\tau$ is said to be a constrained-deadline task system if $D_i \leq T_i$
holds for each $\tau_i$.  Otherwise, such a sporadic task system
$\tau$ is an arbitrary-deadline task system.

A task is said \emph{schedulable} by a scheduling policy if all of its
jobs can finish before their absolute deadlines, i.e., the worst-case
response time of the task is no more than its relative deadline.  A
task system is said \emph{schedulable} by a scheduling policy if all
the tasks in the task system are schedulable. A \emph{schedulability
  test} is to provide sufficient conditions to ensure the feasibility
of the resulting schedule by a scheduling policy. 

Throughout the report, we will focus on fixed-priority
scheduling. That is, each task is associated with a priority level.
We will only present the schedulability test of a certain task
$\tau_k$, that is under analysis. For notational brevity, in the
framework presentation, we will implicitly assume that there are $k-1$
tasks, says $\tau_1, \tau_2, \ldots, \tau_{k-1}$ with higher-priority
than task $\tau_k$. \emph{These $k-1$ higher-priority tasks are
  assumed to be schedulable before we test task $\tau_k$.} We will use
$hp(\tau_k)$ to denote the set of these $k-1$ higher priority tasks,
when their orderings do not matter. Moreover, we only consider the
cases when $k \geq 2$, since $k=1$ is usually trivial.

Note that different task models may have different terminologies
regarding to $C_i$ and $U_i$. Here, we implicitly assume that $U_i$ is
always $C_i/T_i$. The definition of $C_i$ can be very dependent upon
the task systems.

\subsection{Properties of \frameworkku{}}

By using the property defined in Definition~\ref{def:kpoints-k2u}, we
can have the following lemmas in the \frameworkku{} framework
\cite{DBLP:journals/corr/abs-1501.07084}. All the proofs of the
following lemmas are in
\cite{DBLP:journals/corr/abs-1501.07084}.

\begin{lemma}[Chen et al. \cite{DBLP:journals/corr/abs-1501.07084}]
\label{lemma:framework-constrained-k2u}
For a given $k$-point effective schedulability test of a scheduling 
algorithm, defined in
Definition~\ref{def:kpoints-k2u},
in which $0 < t_k$ and $0 < \alpha_i \leq \alpha$, and $0 < \beta_i \leq \beta$ for any
$i=1,2,\ldots,k-1$, task $\tau_k$ is schedulable by the scheduling
algorithm if the following condition holds 
\begin{equation}
\label{eq:schedulability-constrained-k2u}
\frac{C_k}{t_k} \leq \frac{\frac{\alpha}{\beta}+1}{\prod_{j=1}^{k-1} (\beta U_j + 1)} - \frac{\alpha}{\beta}.
\end{equation}
\end{lemma}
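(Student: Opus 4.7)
The plan is to reduce to the case of uniform coefficients and then to exhibit one explicit configuration of test points that verifies Eq.~\eqref{eq:precodition-schedulability-k2u} at $j=k$.

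First, I would observe a monotonicity step: for any fixed sequence $t_1 \leq t_2 \leq \cdots \leq t_k$ and fixed $C_k$, the left-hand side of Eq.~\eqref{eq:precodition-schedulability-k2u} is non-decreasing in each $\alpha_i$ and $\beta_i$. Hence, if the inequality can be satisfied when every $\alpha_i$ is replaced by $\alpha$ and every $\beta_i$ is replaced by $\beta$, then the original inequality with $\alpha_i \leq \alpha$ and $\beta_i \leq \beta$ is also satisfied by the same $t_j$'s. So I would assume $\alpha_i = \alpha$ and $\beta_i = \beta$ throughout.

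Next, with $t_k$ from the hypothesis given, I would define the remaining test points by the geometric recursion
\begin{equation*}
t_j = \frac{t_k}{\prod_{i=j}^{k-1}(1+\beta U_i)}, \qquad j=1,\ldots,k-1,
\end{equation*}
so that $t_{i+1}=t_i(1+\beta U_i)$ and $0<t_1\leq t_2 \leq \cdots \leq t_k$, meeting the ordering requirement of Definition~\ref{def:kpoints-k2u}. The key identity this choice provides is the telescoping
\begin{equation*}
\sum_{i=1}^{k-1} \beta\, t_i U_i \;=\; \sum_{i=1}^{k-1}(t_{i+1}-t_i) \;=\; t_k-t_1.
\end{equation*}

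Finally, I would verify Eq.~\eqref{eq:precodition-schedulability-k2u} at $j=k$, where the condition collapses to $t_k \geq C_k + (\alpha+\beta)(t_k-t_1)/\beta$, equivalently $t_1 \geq (\beta C_k + \alpha t_k)/(\alpha+\beta)$. Substituting the closed form $t_1 = t_k/\prod_{i=1}^{k-1}(1+\beta U_i)$ and dividing by $\beta t_k$ rearranges to exactly the hypothesis Eq.~\eqref{eq:schedulability-constrained-k2u}. The main obstacle is not the algebra but choosing the $t_j$'s correctly; the geometric scaling by $(1+\beta U_i)$ is tailored precisely so that the $k$ separate $j$-indexed constraints fuse, via the telescoping identity, into a single closed-form inequality in $C_k/t_k$.
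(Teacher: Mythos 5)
There is a genuine gap, and it is a quantifier error on the test points rather than an algebraic slip. You treat $t_1,\ldots,t_{k-1}$ as parameters you are free to construct, defining them by the geometric recursion $t_j = t_k/\prod_{i=j}^{k-1}(1+\beta U_i)$. But in Definition~\ref{def:kpoints-k2u} the points $t_1,\ldots,t_k$ are prescribed by the underlying schedulability test: see how the lemma is consumed in Theorem~\ref{thm:constant-inflation}, where $t_i = g_i T_i$ is fixed by the task periods, and it is Eq.~\eqref{eq:precodition-schedulability-k2u} holding at one of \emph{those} points that certifies schedulability. Evaluating the inequality at an invented geometric point carries no schedulability meaning. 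Since the right-hand side of Eq.~\eqref{eq:schedulability-constrained-k2u} does not involve $t_1,\ldots,t_{k-1}$, the lemma is implicitly a statement about \emph{every} ordered assignment $0 < t_1 \leq \cdots \leq t_k$; your argument verifies it for exactly one. (Your preliminary monotonicity reduction to $\alpha_i=\alpha$, $\beta_i=\beta$ is fine.)

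What you have actually established is the tightness of the bound, not its validity. Writing $G(t_1,\ldots,t_{k-1}) = \max_{1\leq j\leq k}\bigl(t_j - \sum_{i=1}^{k-1}\alpha t_i U_i - \sum_{i=1}^{j-1}\beta t_i U_i\bigr)$ and letting $B$ denote the right-hand side of Eq.~\eqref{eq:schedulability-constrained-k2u}, the lemma asserts $\min_{t} G \geq B\,t_k$ over all ordered assignments; you computed $G$ at the geometric assignment (where all $k$ inner terms coincide, so looking only at $j=k$ is harmless \emph{there}) and obtained the value $B\,t_k$. For other assignments the $j=k$ term alone can fall well below $B\,t_k$ --- e.g.\ with all $t_i$ close to $t_k$ it becomes $t_k(1-(\alpha+\beta)\sum_i U_i)$ --- so one genuinely needs the maximum over $j$ together with a worst-case argument showing that the geometric assignment minimizes $G$. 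That is the content of the proof in the cited \frameworkku{} paper: argue the contrapositive (assume $C_k + \sum_{i}\alpha_i t_i U_i + \sum_{i<j}\beta_i t_i U_i > t_j$ fails the test for every $j$) and show, by an extreme-point/exchange analysis over the $t_i$, that this forces $C_k/t_k$ to exceed $B$. Your telescoping identity is the correct closing computation, but it must emerge as the output of that worst-case analysis, not as a choice of the prover.
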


\begin{lemma}[Chen et al. \cite{DBLP:journals/corr/abs-1501.07084}]
\label{lemma:framework-totalU-constrained-k2u}
For a given $k$-point effective schedulability test of a scheduling
algorithm, defined in
Definition~\ref{def:kpoints-k2u},
in which $0 < t_k$ and $0 < \alpha_i \leq \alpha$ and $0 < \beta_i \leq \beta$ for any
$i=1,2,\ldots,k-1$, task $\tau_k$ is schedulable by the scheduling
algorithm if 
\begin{equation}
\label{eq:schedulability-totalU-constrained-k2u}
\frac{C_k}{t_k} + \sum_{i=1}^{k-1}U_i \leq \frac{(k-1)((\alpha+\beta)^{\frac{1}{k}}-1)+((\alpha+\beta)^{\frac{1}{k}}-\alpha)}{\beta}.
\end{equation}
\end{lemma}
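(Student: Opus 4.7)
The plan is to derive Lemma~\ref{lemma:framework-totalU-constrained-k2u} as a corollary of Lemma~\ref{lemma:framework-constrained-k2u} using the arithmetic-geometric mean (AM-GM) inequality, rather than returning to the $k$-point definition directly. This is the standard route from a hyperbolic-style bound to a utilization-sum bound.

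First I would rewrite the hyperbolic condition \eqref{eq:schedulability-constrained-k2u} in Lemma~\ref{lemma:framework-constrained-k2u} in an equivalent product form by multiplying through by $\beta \prod_{j=1}^{k-1}(\beta U_j + 1)$, obtaining the sufficient condition
\begin{equation*}
\bigl(\beta\tfrac{C_k}{t_k} + \alpha\bigr)\prod_{j=1}^{k-1}(\beta U_j + 1) \;\leq\; \alpha + \beta.
\end{equation*}
This expresses the sufficient test as a single product of $k$ positive factors bounded by $\alpha+\beta$. The goal becomes: show that the utilization hypothesis \eqref{eq:schedulability-totalU-constrained-k2u} forces this product inequality to hold.

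Next I would apply AM-GM to these $k$ factors, namely $z_0 = \beta\frac{C_k}{t_k} + \alpha$ and $z_j = \beta U_j + 1$ for $j=1,\ldots,k-1$, yielding
\begin{equation*}
\prod_{j=0}^{k-1} z_j \;\leq\; \left(\frac{1}{k}\sum_{j=0}^{k-1} z_j\right)^{\!k}.
\end{equation*}
It then suffices to ensure that the arithmetic mean satisfies $\frac{1}{k}\sum_j z_j \leq (\alpha+\beta)^{1/k}$, equivalently $\sum_j z_j \leq k(\alpha+\beta)^{1/k}$. Expanding the sum,
\begin{equation*}
\sum_{j=0}^{k-1} z_j \;=\; \beta\!\left(\frac{C_k}{t_k} + \sum_{i=1}^{k-1} U_i\right) + \alpha + (k-1),
\end{equation*}
so the required inequality is precisely $\beta\bigl(\frac{C_k}{t_k}+\sum_i U_i\bigr) + \alpha + (k-1) \leq k(\alpha+\beta)^{1/k}$, which after dividing by $\beta$ and splitting the right-hand side as $(k-1)(\alpha+\beta)^{1/k} + (\alpha+\beta)^{1/k} - \alpha - (k-1)$ collapses exactly to the hypothesis \eqref{eq:schedulability-totalU-constrained-k2u} of the lemma.

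Conceptually there is no real obstacle beyond being careful with signs and the direction of AM-GM: since we are upper-bounding a product, we use the standard (geometric mean $\leq$ arithmetic mean) direction. The only place one must be vigilant is the final bookkeeping step, showing the algebraic identity
\begin{equation*}
k(\alpha+\beta)^{1/k} - \alpha - (k-1) \;=\; (k-1)\bigl((\alpha+\beta)^{1/k}-1\bigr) + \bigl((\alpha+\beta)^{1/k}-\alpha\bigr),
\end{equation*}
which matches the exact form in the statement. The positivity assumptions $t_k > 0$, $\alpha_i,\beta_i > 0$ guarantee all $z_j$ are positive, so AM-GM applies without caveat, and Lemma~\ref{lemma:framework-constrained-k2u} then delivers the schedulability conclusion.
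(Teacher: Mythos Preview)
The paper does not actually give its own proof of this lemma: it merely records the statement and defers all proofs of Lemmas~\ref{lemma:framework-constrained-k2u}--\ref{lemma:framework-general-k2u} to the cited reference~\cite{DBLP:journals/corr/abs-1501.07084}. So there is no in-paper argument to compare against.

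That said, your derivation is correct. Rewriting the hyperbolic condition of Lemma~\ref{lemma:framework-constrained-k2u} as the product bound $\bigl(\beta\tfrac{C_k}{t_k}+\alpha\bigr)\prod_{j}(\beta U_j+1)\le\alpha+\beta$ and then applying AM--GM to the $k$ positive factors is exactly the standard passage from a hyperbolic bound to a total-utilization bound (the Liu--Layland/Bini--Buttazzo pattern). The positivity hypotheses $t_k>0$, $C_k>0$, $\alpha,\beta>0$ ensure every factor is strictly positive, so AM--GM applies cleanly, and your final algebraic identity $k(\alpha+\beta)^{1/k}-\alpha-(k-1)=(k-1)\bigl((\alpha+\beta)^{1/k}-1\bigr)+\bigl((\alpha+\beta)^{1/k}-\alpha\bigr)$ matches the statement verbatim. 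There is no gap.
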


\begin{lemma}[Chen et al. \cite{DBLP:journals/corr/abs-1501.07084}]
\label{lemma:framework-totalU-exclusive-k2u}
For a given $k$-point effective schedulability test of a scheduling
algorithm, defined in
Definition~\ref{def:kpoints-k2u},
in which $0 < t_k$ and $0 < \alpha_i \leq \alpha$ and $0 < \beta_i \leq \beta$ for any
$i=1,2,\ldots,k-1$, task $\tau_k$ is schedulable by the scheduling
algorithm if 
\begin{equation}
\label{eq:schedulability-totalU-exclusive-k2u}
\beta \sum_{i=1}^{k-1}U_i \leq \ln(\frac{\frac{\alpha}{\beta}+1}{\frac{C_k}{t_k}+\frac{\alpha}{\beta}}).
\end{equation}
\end{lemma}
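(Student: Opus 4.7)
The plan is to derive this utilization-based condition directly from Lemma~\ref{lemma:framework-constrained-k2u}, using the elementary inequality $1+x \leq e^x$ valid for every real $x$. The idea is that the hyperbolic bound \eqref{eq:schedulability-constrained-k2u} is already a sufficient schedulability condition expressed as a product over individual $U_j$'s, and the present lemma just re-packages it as a condition on the \emph{sum} $\sum_{i=1}^{k-1} U_i$ by absorbing the product into an exponential.

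First I would rewrite \eqref{eq:schedulability-constrained-k2u} in its equivalent product form: $\tau_k$ is schedulable whenever
$$\prod_{j=1}^{k-1}(\beta U_j + 1) \;\leq\; \frac{\alpha/\beta+1}{C_k/t_k+\alpha/\beta}.$$
The right-hand side is well defined and positive because $C_k, t_k, \alpha, \beta > 0$. Next, I would apply $1+x \leq e^x$ with $x = \beta U_j$ to each factor and multiply, obtaining
$$\prod_{j=1}^{k-1}(\beta U_j + 1) \;\leq\; \prod_{j=1}^{k-1} e^{\beta U_j} \;=\; \exp\!\Bigl(\beta \sum_{j=1}^{k-1} U_j\Bigr).$$
Finally, exponentiating the hypothesis \eqref{eq:schedulability-totalU-exclusive-k2u} gives
$$\exp\!\Bigl(\beta \sum_{i=1}^{k-1} U_i\Bigr) \;\leq\; \frac{\alpha/\beta+1}{C_k/t_k+\alpha/\beta},$$
and chaining the two inequalities yields precisely the product-form sufficient condition; schedulability of $\tau_k$ then follows from Lemma~\ref{lemma:framework-constrained-k2u}.

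There is no serious obstacle: the argument is essentially a one-line reduction once the hyperbolic form is available. The only minor checks are that the logarithm's argument in \eqref{eq:schedulability-totalU-exclusive-k2u} is positive (automatic from positivity of $\alpha$, $\beta$, $C_k$, $t_k$), and that the exponential inequality is used with the correct orientation ($1+x \leq e^x$, not its reverse), so that the direction of the sufficient condition is preserved. The approach also makes transparent why this bound is (in general) strictly weaker than Lemma~\ref{lemma:framework-constrained-k2u}, while being attractive precisely because it depends only on the total utilization $\sum_{i=1}^{k-1} U_i$.
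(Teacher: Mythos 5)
Your proof is correct: rearranging the hyperbolic condition of Lemma~\ref{lemma:framework-constrained-k2u} into the product form $\prod_{j=1}^{k-1}(\beta U_j+1)\leq\frac{\alpha/\beta+1}{C_k/t_k+\alpha/\beta}$ (valid since all quantities involved are positive) and then bounding the product by $\exp\bigl(\beta\sum_{i=1}^{k-1}U_i\bigr)$ via $1+x\leq e^x$ is a sound one-line reduction, and the orientation and positivity checks you mention all go through. This report itself gives no proof (it defers to \cite{DBLP:journals/corr/abs-1501.07084}), but your derivation is essentially the standard argument used there, so there is nothing to correct.
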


\begin{lemma}[Chen et al. \cite{DBLP:journals/corr/abs-1501.07084}]
\label{lemma:framework-general-k2u}
For a given $k$-point effective schedulability test of a fixed-priority scheduling
algorithm, defined in
Definition~\ref{def:kpoints-k2u}, task $\tau_k$ is schedulable by the scheduling algorithm,
in which $0 < t_k$ and $0 < \alpha_i$ and $0 < \beta_i$ for any
$i=1,2,\ldots,k-1$, 
 if
the following condition holds 
\begin{equation}
\label{eq:schedulability-general-k2u}
0 < \frac{C_k}{t_k} \leq 1 -  \sum_{i=1}^{k-1}  \frac{U_i(\alpha_i
  +\beta_i)}{\prod_{j=i}^{k-1} (\beta_jU_j + 1)}.
\end{equation}
\end{lemma}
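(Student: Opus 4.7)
The plan is to argue the contrapositive: assume the $k$-point test of Definition~\ref{def:kpoints-k2u} fails at every $t_j$, and derive a strict lower bound on $C_k/t_k$ that contradicts the hypothesis \eqref{eq:schedulability-general-k2u}. Write $S_j := C_k + \sum_{i=1}^{k-1}\alpha_i t_i U_i + \sum_{i=1}^{j-1}\beta_i t_i U_i$, so that failure at position $j$ reads $t_j < S_j$. The first observation is the one-step recurrence $S_{j+1} = S_j + \beta_j t_j U_j < S_j(1+\beta_j U_j)$, obtained by substituting the strict bound $t_j < S_j$. Iterating this gives $S_i < S_1 \prod_{l=1}^{i-1}(1+\beta_l U_l)$ for every $i \in \{1,\ldots,k\}$.

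The second step is to close the recursion on $S_1$ itself. Since $S_1 = C_k + \sum_{i=1}^{k-1}\alpha_i t_i U_i$, applying $t_i < S_i$ together with the iterated bound pulls $S_1$ outside the sum and yields $S_1\bigl(1 - \sum_{i=1}^{k-1}\alpha_i U_i \prod_{l=1}^{i-1}(1+\beta_l U_l)\bigr) < C_k$. Combining with $t_k < S_k < S_1 \prod_{l=1}^{k-1}(1+\beta_l U_l)$ and rearranging produces the explicit lower bound
\[
\frac{C_k}{t_k} > \frac{1}{\prod_{l=1}^{k-1}(1+\beta_l U_l)} - \sum_{i=1}^{k-1}\frac{\alpha_i U_i}{\prod_{l=i}^{k-1}(1+\beta_l U_l)}.
\]

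The final step is an algebraic identity that rewrites this right-hand side as $1 - \sum_{i=1}^{k-1} U_i(\alpha_i+\beta_i)/\prod_{l=i}^{k-1}(1+\beta_l U_l)$, giving the desired contradiction with \eqref{eq:schedulability-general-k2u}. The key is the telescoping identity $\frac{\beta_l U_l}{\prod_{m=l}^{k-1}(1+\beta_m U_m)} = \frac{1}{\prod_{m=l+1}^{k-1}(1+\beta_m U_m)} - \frac{1}{\prod_{m=l}^{k-1}(1+\beta_m U_m)}$, which collapses $\sum_{i=1}^{k-1}\beta_i U_i/\prod_{l=i}^{k-1}(1+\beta_l U_l)$ to $1 - 1/\prod_{l=1}^{k-1}(1+\beta_l U_l)$; adding this to the derived bound converts the leading $1/\prod$ and the $\alpha_i$ sum into the $1$ and $(\alpha_i+\beta_i)$ sum of the lemma statement.

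I expect the main obstacle to be precisely this telescoping bookkeeping over nested products rather than any conceptual difficulty; the rest is induction on the one-step recurrence and a linear rearrangement. A minor sanity check is that the parenthesized coefficient of $S_1$ above must be positive in order to divide through, but this follows from the hypothesis itself, since multiplying the right-hand side of \eqref{eq:schedulability-general-k2u} by the positive quantity $\prod_{l=1}^{k-1}(1+\beta_l U_l)$ recovers exactly that coefficient via the same identity.
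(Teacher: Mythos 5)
Your proof is correct: the one-step recurrence $S_{j+1}<S_j(1+\beta_jU_j)$, the closure on $S_1$, and the telescoping identities all check out, and the positivity of $1-\sum_{i=1}^{k-1}\alpha_iU_i\prod_{l=1}^{i-1}(1+\beta_lU_l)$ does follow from the hypothesis as you claim (and could even be bypassed, since if that quantity were nonpositive the right-hand side of \eqref{eq:schedulability-general-k2u} would be nonpositive and the lemma vacuous). Note that this paper gives no proof of Lemma~\ref{lemma:framework-general-k2u} but imports it from \cite{DBLP:journals/corr/abs-1501.07084}; your contradiction-plus-telescoping argument is essentially the argument given there.
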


\section{Classes of Applicable Schedulability Tests}
\label{sec:different-arrival}

We will present three classes of applicable schedulability tests,
which can allow automatic parameter derivations:
\begin{itemize}
\item {\bf Constant inflation}: This class covers a wide range of
  applications in which the workload of a higher-priority task may
  have a constant inflation to quantify the additional workload in the
  analysis window.
\item {\bf Bounded delayed service}: This class covers a wide range of
  applications in which the computation service provided to the task
  system can be lower bounded by a constant slope with a constant
  offset.
\item {\bf Arrival jitter}: This class covers a wide range of
  applications in which a higher-priority task may have arrival jitter
  in the analysis window.
\end{itemize}
\subsection{Constant Inflation}
\label{sec:constant-inflation}

Suppose that the schedulability test is as follows:
\begin{equation}
  \label{eq:test-constant-inflation}
  \exists 0 < t \leq D_k \mbox{ s.t. } C_k + \sum_{\tau_i \in hp(\tau_k)}\sigma \left(\ceiling{\frac{t}{T_i}} C_i + bC_i\right) \leq t,
\end{equation}
where $\sigma > 0$ and $b \geq 0$.
We now classify the task set
$hp(\tau_k)$ into two subsets:
\begin{itemize}
\item $hp_1(\tau_k)$ consists of the higher-priority tasks with periods
  smaller than $D_k$.
\item $hp_2(\tau_k)$ consists of the higher-priority tasks with periods
  greater than or equal to $D_k$.
\end{itemize}
Therefore, we can rewrite Eq.~\eqref{eq:test-constant-inflation} to 
\begin{equation}
  \label{eq:test-constant-inflation2}
  \exists 0 < t \leq D_k \mbox{ s.t. } C_k' + \sum_{\tau_i \in hp_1(\tau_k)}\sigma \left(\ceiling{\frac{t}{T_i}} C_i + bC_i\right) \leq t,
\end{equation}
where $C_k'$ is defined as $C_k + \sum_{\tau_i \in hp_2(\tau_k)}\sigma (1+b)C_i$.

\begin{theorem}
  \label{thm:constant-inflation}
  For 
  Eq.~\eqref{eq:test-constant-inflation2}, the $k$-point effective
  schedulability test in Definition~\ref{def:kpoints-k2u} is with the
  following settings:
  \begin{compactitem}
  \item $t_k = D_k$,
  \item  for $\tau_i \in hp_1(\tau_k)$, $t_i =\left (\ceiling{\frac{D_k}{T_i}}-1\right)T_i = g_i T_i$,
  \item  for $\tau_i \in hp_1(\tau_k)$,  the parameter $\alpha_i$ is $\frac{\sigma(g_i+b)}{g_i}$ with $0 < \alpha_i \leq \sigma(1+b)$, and 
  \item  for $\tau_i \in hp_1(\tau_k)$,  the parameter $\beta_i$ is $\frac{\sigma}{g_i}$ with $0 < \beta_i \leq \sigma$.
  \end{compactitem}
  The tasks in $hp_1(\tau_k)$ are indexed according to non-decreasing
  $t_i$ defined above to satisfy Definition~\ref{def:kpoints-k2u}.
\end{theorem}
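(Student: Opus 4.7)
The plan is to show that any successful instance of the $k$-point condition of Definition~\ref{def:kpoints-k2u} under this parameter assignment already implies Eq.~\eqref{eq:test-constant-inflation2} at the corresponding test point $t=t_j$. Since Eq.~\eqref{eq:test-constant-inflation2} is itself a sufficient schedulability condition, this will establish that the stated $\alpha_i,\beta_i,t_i$ furnish a legitimate $k$-point effective schedulability test.

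First I would dispose of the preliminary bounds. Every $\tau_i\in hp_1(\tau_k)$ has $T_i<D_k$, so $\ceiling{D_k/T_i}\geq 2$ and hence $g_i\geq 1$. This immediately gives $\alpha_i=\sigma(g_i+b)/g_i=\sigma(1+b/g_i)\leq\sigma(1+b)$ and $\beta_i=\sigma/g_i\leq\sigma$, matching the ranges claimed. Moreover $t_i=g_iT_i<D_k=t_k$, so after reindexing $hp_1(\tau_k)$ by non-decreasing $g_iT_i$ the ordering $t_1\leq t_2\leq\cdots\leq t_k$ required by Definition~\ref{def:kpoints-k2u} is automatic.

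Next I would substitute $U_i=C_i/T_i$ and $t_i=g_iT_i$, which yields the clean identities $\alpha_i t_i U_i=\sigma(g_i+b)C_i$ and $\beta_i t_i U_i=\sigma C_i$. Plugging these into the $k$-point inequality at $t_j$ (with $C_k'$ playing the role of the theorem's $C_k$) rewrites it as
\begin{equation*}
C_k' + \sum_{i=1}^{j-1}\sigma(g_i+1+b)C_i + \sum_{i=j}^{k-1}\sigma(g_i+b)C_i \;\leq\; t_j.
\end{equation*}
Comparing this term by term against Eq.~\eqref{eq:test-constant-inflation2} evaluated at $t=t_j$, the implication reduces to two coefficient bounds: $\ceiling{t_j/T_i}\leq g_i+1$ for $i<j$ and $\ceiling{t_j/T_i}\leq g_i$ for $i\geq j$.

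The first bound is immediate from $t_j\leq D_k$ together with $g_i+1=\ceiling{D_k/T_i}$. The second bound is the one genuinely technical step and the main obstacle; it is precisely where the non-decreasing indexing of $hp_1(\tau_k)$ is exploited: for $i\geq j$ we have $t_j\leq t_i=g_iT_i$, so $t_j/T_i\leq g_i$, and because $g_i$ is an integer the ceiling on the left preserves the inequality (the potentially worrying case $i>j$ with $t_i=t_j$ is absorbed by the same integer-ceiling argument). Once both bounds are in hand, the $k$-point inequality at $t_j$ implies Eq.~\eqref{eq:test-constant-inflation2} at $t=t_j\in(0,D_k]$, which is what had to be shown.
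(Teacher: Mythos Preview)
Your proposal is correct and follows essentially the same route as the paper: the paper upper-bounds the left-hand side of Eq.~\eqref{eq:test-constant-inflation2} at $t=t_j$ by the $k$-point expression via precisely the two ceiling estimates you isolate, namely $\ceiling{t_j/T_i}\leq g_i+1$ for $i<j$ (using $t_j\leq D_k$) and $\ceiling{t_j/T_i}\leq g_i$ for $i\geq j$ (using the non-decreasing ordering of the $t_i$), and then deduces the bounds on $\alpha_i,\beta_i$ from $g_i\geq 1$. The only cosmetic difference is that you establish the $\alpha_i,\beta_i$ ranges first and frame the main inequality as ``$k$-point condition $\Rightarrow$ Eq.~\eqref{eq:test-constant-inflation2}'', whereas the paper writes the same chain of inequalities as an upper bound on the demand and reads off the coefficients at the end.
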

\begin{proof}
  Let $t_i$ be $\left (\ceiling{\frac{D_k}{T_i}}-1\right)T_i = g_i
  T_i$, where $g_i$ is an integer. By the definition of
  $hp_1(\tau_k)$, we know that $g_i \geq 1$. We index the tasks in
  $hp_1(\tau_k)$ according to non-decreasing $t_i$. We assume that
  there are $k-1$ tasks in $hp(\tau_k)$ for notational
  brevity. 

  Therefore, the left-hand side of Eq.~\eqref{eq:test-constant-inflation2} at time
  $t=t_j$ upper bounded by {\small \begin{align}
      &   C_k' + \sum_{i=1}^{k-1}\sigma \left(\ceiling{\frac{t_j}{T_i}} C_i + bC_i\right)\nonumber\\
      \leq \;\;& C_k' + \sum_{i=1}^{j-1}\sigma \left(\ceiling{\frac{D_k}{T_i}} C_i + bC_i\right) + \sum_{i=j}^{k-1}\sigma \left(\ceiling{\frac{t_i}{T_i}} C_i + bC_i\right)\nonumber\\
      = \;\;& C_k' + \sum_{i=1}^{j-1}\sigma \left((g_i+1) C_i + bC_i\right) + \sum_{i=j}^{k-1}\sigma \left(g_i C_i + bC_i\right)\nonumber\\
      = \;\;& C_k' + \sum_{i=1}^{k-1}\sigma \left(g_i C_i + bC_i\right) + \sum_{i=1}^{j-1} \sigma\cdot C_i \nonumber\\
      =_1\;\; & C_k' + \sum_{i=1}^{k-1} \frac{\sigma(g_i+b)}{g_i} t_i
      U_i + \sum_{i=1}^{j-1} \frac{\sigma}{g_i}t_i U_i,
    \end{align}}where the inequality comes from $t_1 \leq t_2 \leq
  \cdots \leq t_k = D_k$ in our index rule, and $=_1$ comes from the
  setting that $C_i = U_i T_i = \frac{1}{g_i} t_i U_i$. That is, the
  test in Eq. \eqref{eq:test-constant-inflation2} can be safely
  rewritten as
  \[
(\exists t_j | j=1,2,\ldots, k),\qquad C_k' + \sum_{i=1}^{k-1} \alpha_i t_i
      U_i + \sum_{i=1}^{j-1} \beta_i t_i U_i \leq t.
  \]

Therefore,
  we can conclude the compatibility of the test with the
  \frameworkku{} framework by setting $\alpha_i=\frac{\sigma(g_i+b)}{g_i}$ and $\beta_i=\frac{\sigma}{g_i}$.
  Due to the fact that $g_i \geq 1$, we also know that $0 < \alpha_i = \sigma(1+\frac{b}{g_i}) \leq \sigma(1+b)$ and $0 < \beta_i \leq \sigma$. This concludes the proof. 
\end{proof}

We can now directly apply
Lemmas~\ref{lemma:framework-constrained-k2u},
\ref{lemma:framework-totalU-constrained-k2u}, and
\ref{lemma:framework-totalU-exclusive-k2u} for the test in
Eq.~\eqref{eq:test-constant-inflation2}.
\begin{corollary}
\label{cor:inflation}
  For a schedulability test in
  Eq.~\eqref{eq:test-constant-inflation2}, task $\tau_k$ is
  schedulable if
  \begin{equation}
    \label{eq:hyperbolic-form1}
\left(\frac{C_k'}{D_k} + (1+b)\right)\prod_{\tau_i \in hp_1(\tau_k)} (\sigma U_i+1) \leq 2+b,
  \end{equation}
  or if
  \begin{equation}
    \label{eq:hyperbolic-form2}
\sigma\sum_{\tau_i \in hp_1(\tau_k)}U_i \leq \ln\left(\frac{2+b}{\frac{C_k'}{D_k} + 1+b}\right)
  \end{equation}
\end{corollary}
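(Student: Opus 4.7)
The plan is to obtain Corollary~\ref{cor:inflation} as a direct composition of Theorem~\ref{thm:constant-inflation} with two of the k2U framework lemmas: Lemma~\ref{lemma:framework-constrained-k2u} delivers the hyperbolic-form inequality in Eq.~\eqref{eq:hyperbolic-form1}, and Lemma~\ref{lemma:framework-totalU-exclusive-k2u} delivers the logarithmic inequality in Eq.~\eqref{eq:hyperbolic-form2}. No new analytic content is required; both conclusions should follow from identifying the correct uniform constants and then a short algebraic rearrangement.

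First I would invoke Theorem~\ref{thm:constant-inflation} to conclude that the test in Eq.~\eqref{eq:test-constant-inflation2} is a $k$-point effective schedulability test in the sense of Definition~\ref{def:kpoints-k2u}, where the higher-priority task set passed into the framework is $hp_1(\tau_k)$ (the tasks in $hp_2(\tau_k)$ having already been absorbed into $C_k'$), the anchor point is $t_k = D_k > 0$, and the per-task coefficients satisfy the uniform bounds $0 < \alpha_i \leq \sigma(1+b)$ and $0 < \beta_i \leq \sigma$. Hence both lemmas apply with the choices $\alpha = \sigma(1+b)$, $\beta = \sigma$, $t_k = D_k$, and the role of $C_k$ played by $C_k'$.

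Substituting these constants into Eq.~\eqref{eq:schedulability-constrained-k2u}, the ratio $\alpha/\beta$ collapses to $1+b$, so the resulting inequality can be rearranged (move $(1+b)$ to the left, then clear the denominator) into $\left(\tfrac{C_k'}{D_k} + (1+b)\right)\prod_{\tau_i \in hp_1(\tau_k)}(\sigma U_i + 1) \leq 2+b$, which is exactly Eq.~\eqref{eq:hyperbolic-form1}. The same substitutions in Eq.~\eqref{eq:schedulability-totalU-exclusive-k2u} turn the right-hand side into $\ln\!\left(\tfrac{2+b}{C_k'/D_k + 1+b}\right)$, immediately yielding Eq.~\eqref{eq:hyperbolic-form2}.

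The only point that requires any care, and therefore the mildest possible obstacle, is ensuring that the product and the sum range over $hp_1(\tau_k)$ rather than the full $hp(\tau_k)$; once the reduction from Eq.~\eqref{eq:test-constant-inflation} to Eq.~\eqref{eq:test-constant-inflation2} is cited and one notes that Theorem~\ref{thm:constant-inflation} indexes only tasks in $hp_1(\tau_k)$, this is automatic. Note that Lemma~\ref{lemma:framework-totalU-constrained-k2u} is not needed here since neither stated inequality has the additive $\sum U_i$ form it produces.
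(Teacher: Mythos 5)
Your proposal is correct and follows exactly the route the paper takes: its proof of Corollary~\ref{cor:inflation} is a one-line citation of Theorem~\ref{thm:constant-inflation} together with Lemma~\ref{lemma:framework-constrained-k2u} and Lemma~\ref{lemma:framework-totalU-exclusive-k2u}, and your substitutions $\alpha=\sigma(1+b)$, $\beta=\sigma$, $t_k=D_k$, $C_k\mapsto C_k'$ (so that $\alpha/\beta=1+b$) and the subsequent rearrangements are exactly the omitted algebra. Your remarks that the product and sum range only over $hp_1(\tau_k)$ and that Lemma~\ref{lemma:framework-totalU-constrained-k2u} is not needed are both accurate.
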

\begin{proof}
  This comes directly from Theorem~\ref{thm:constant-inflation} and
  Lemma~\ref{lemma:framework-constrained-k2u} and
  Lemma~\ref{lemma:framework-totalU-exclusive-k2u}.
\end{proof}

\subsubsection{Applications}
 This class of schedulability tests in
Eq.~\eqref{eq:test-constant-inflation} covers quite a lot of cases in
both uniprocessor and multiprocessor systems. 

\underline{In uniprocessor systems:}
\begin{itemize}
\item Constrained-deadline and implicit-deadline uniprocessor task scheduling \cite{liu73scheduling,journals/pe/LeungW82}: A simple schedulability test for this case is to set $\sigma=1$ and $b=0$ in Eq.~\eqref{eq:test-constant-inflation}. This is used to demonstrate the usefulness of the \frameworkku{} framework in \cite{DBLP:journals/corr/abs-1501.07084}.

\item Uniprocessor non-preemptive scheduling \cite{DBLP:conf/ecrts/BruggenCH15}: This is a known case
  in which $C_k$ should be set to $C_k + \max_{\tau_i \in lp(\tau_k)}
  C_{i}$, $\sigma=1$ and $b=0$ in Eq.~\eqref{eq:test-constant-inflation}, where $lp(\tau_k)$ is the set of the
  lower-priority tasks than task $\tau_k$. This is implicitly used in \cite{DBLP:conf/ecrts/BruggenCH15}.

\item Bursty-interference \cite{RTSS14a}: This is a known case in
  which $\sigma=1$ and $b$ is set to a constant to reflect the bursty
  interference for the first job in the analysis window in Eq.~\eqref{eq:test-constant-inflation}. It is shown
  in \cite{RTSS14a} that this can be used to model the schedulability
  analysis of deferrable servers and self-suspending task systems (by
  setting $C_k$ to $C_k + S_k$, where $S_k$ is the maximum
  self-suspending time of task $\tau_k$).

\end{itemize}

\underline{In multiprocessor systems} with $M$ processors and constrained deadline
task sets:
\begin{itemize}
\item Multiprocessor global DM/RM scheduling for sporadic task
  systems: A simple schedulability test in this case is to set $\sigma=\frac{1}{M}$ and $b=1$ in Eq.~\eqref{eq:test-constant-inflation}. This
  is used to demonstrate the usefulness of the \frameworkku{}
  framework in \cite{DBLP:journals/corr/abs-1501.07084}.
\item Multiprocessor global DM/RM scheduling for self-suspending task
  systems and directed-acyclic-graph (DAG) task structures: This is
  similar to the above case for sporadic task systems in which
  $\sigma=\frac{1}{M}$ and $b=1$ by setting different equivalent values of $C_k$ in Eq.~\eqref{eq:test-constant-inflation}. For details, please refer to
  \cite{DBLP:journals/corr/abs-1501.07084}.
\item Multiprocessor partitioned RM/DM scheduling for sporadic task
  systems: Testing whether a task $\tau_k$ can be feasibly assigned
  \emph{statically} on a processor can be done by setting
  $\sigma=\frac{1}{M}$ and $b=0$ in
  Eq.~\eqref{eq:test-constant-inflation}. This is used in
  \cite{DBLP:journals/corr/Chen15k} for improving the speedup factors
  and utilization-based schedulability tests.
\end{itemize}

\subsection{Bounded Delay Services}
\label{sec:different-service}

We now discuss another class of schedulability tests by considering
\emph{bounded services}. In the class of the schedulability tests in
Eq.~\eqref{eq:test-constant-inflation}, the right-hand side of the
inequality is always $t$. Here, in this subsection, we will change the
right-hand side of Eq.~\eqref{eq:test-constant-inflation} to $A(t)$,
where $A(t)$ is defined to quantify the minimum service provided by
the system in any interval length $t > 0$ (after the normalization for the
schedulability test of task $\tau_k$).  We will consider the following
schedulability test for verifying the schedulability of task $\tau_k$:
\begin{equation}
  \label{eq:test-bounded-service}
  \exists 0 < t \leq D_k \mbox{ s.t. } C_k + \sum_{\tau_i \in hp(\tau_k)}\sigma \left(\ceiling{\frac{t}{T_i}} C_i + bC_i\right) \leq A(t),
\end{equation}
where $\sigma > 0$ and $b \geq 0$ are constants. We will specifically
consider two types of $A(t)$:
\begin{itemize}
\item {\it Segmented service curves}: An example of such a case is the
  time division multiple access (TDMA) arbitrary policy
  \cite{DBLP:conf/rtcsa/Sha03,WTVL06} to provide fixed time slots with
  $\sigma C_{slot}$ total amount of service in every TDMA cycle length
  $T_{cycle}$.  In this case, we consider that 
  \begin{equation}
    \label{eq:service-curve-TDMA}
    A(t) = t - \ceiling{\frac{t}{T_{cycle}}}\cdot (T_{cycle} - \sigma C_{slot}).
  \end{equation}
  where $T_{cycle}$ and $C_{slot}$ are specified as constants.  Note
  that the setting of $A(t)$ in Eq.~\eqref{eq:service-curve-TDMA} is
  an approximation of the original TDMA service curve, to be discussed
  later.

\item {\it Bounded delay service curves}: The service provided by the
  system is lower bounded by a constant slope $\gamma$ when $t \geq
  t_{delay}$, where $\gamma$ and $t_{delay}$ are specified as
  constants. Specifically, in this case,
  \begin{equation}
    \label{eq:service-curve-bounded-delay}
    A(t) = \max\{0, \gamma(t-t_{delay})\}.
  \end{equation}
\end{itemize}

Figure~\ref{fig:bounded-functions} provides an example for the above
two cases. We will discuss how these two bounds in
Eq.~\eqref{eq:service-curve-TDMA} and
Eq.~\eqref{eq:service-curve-bounded-delay} are related to TMDA and
other hierarchical scheduling policies.

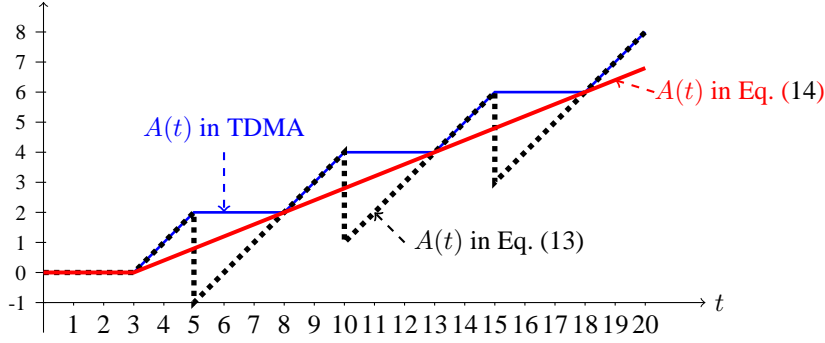
\begin{figure*}[t]
\centering
  \begin{tikzpicture}[node distance=0cm, y=0.4cm, x=0.4cm]
    \draw[->] (0,-1) -- coordinate (x axis mid) (22,-1) node[right]{$t$};
    \draw[->] (0,-2) -- coordinate (y axis mid) (0,9) node[above]{};
    \foreach \x in {1,...,20}
    \draw (\x,-1) -- (\x,-1.1) 
    node[anchor=north] { \x}; 
    \foreach \y in {-1,0,1,...,8}
    \draw (1pt, \y) -- (-3pt, \y) 
    node[anchor=east] {\footnotesize \y}; 
    
      \draw[line width = 1pt,color=blue] plot[] 
      (0, 0) -- (3, 0) -- (5, 2) -- (8, 2) -- (10, 4) -- (13, 4) -- (15, 6) -- (18, 6) -- (20, 8);
      \draw[line width = 2pt,color=black, dotted] plot[] 
      (0, 0) -- (3, 0) -- (5, 2) -- (5, -1) -- (8, 2) -- (10, 4) -- (10, 1) -- (13, 4) -- (15, 6) -- (15, 3) -- (18, 6) -- (20, 8);
      \draw[line width = 1.5pt,color=red] plot[] 
      (0, 0) -- (3, 0) -- (18, 6) -- (20, 6.8);
      \draw[black,thick,dashed,->](12, 1) -- (11, 2);
      \draw[black](12,1) node[anchor=west]{$A(t)$ in Eq.~\eqref{eq:service-curve-TDMA}};
      \draw[red,thick,dashed,->](20.3, 6) -- (19, 6.4);
      \draw[red](20,6) node[anchor=west]{$A(t)$ in Eq.~\eqref{eq:service-curve-bounded-delay}};
      \draw[blue,thick,dashed,->](6, 4) -- (6, 2);
      \draw[blue](6,4) node[anchor=south]{$A(t)$ in TDMA};
      
    \end{tikzpicture}
    \caption{An example of delayed service curve: $\sigma=1$,
      $C_{slot}=2$, and $T_{cycle}=5$, where $\gamma=\frac{\sigma
        C_{slot}}{T_{cycle}} \leq 1$ and $t_{delay}=T_{cycle} - \sigma
      C_{slot}$ in Eq.~\eqref{eq:service-curve-bounded-delay}.}
  \label{fig:bounded-functions}
\end{figure*}

\subsubsection{Segmented service curve: $A(t)$ in  Eq.~\eqref{eq:service-curve-TDMA}}
\label{sec:segmented} 
For Eq.~\eqref{eq:test-bounded-service}, in which $A(t)$ is defined in
Eq.~\eqref{eq:service-curve-TDMA}, the schedulability test of task $\tau_k$ is as
follows:
{\small 
\begin{align}
 & \qquad\qquad\exists 0 < t \leq D_k \mbox{ s.t. }\nonumber\\
& C_k + \sum_{\tau_i \in hp(\tau_k)}\sigma \left(\ceiling{\frac{t}{T_i}} C_i + bC_i\right)\nonumber\\
\leq\;\;& t - \ceiling{\frac{t}{T_{cycle}}}\cdot (T_{cycle} - \sigma C_{slot})   \label{eq:test-bounded-TDMA}
\end{align}}
where $\sigma > 0$, $b \geq 0$, $C_{slot}$, and $T_{slot}$ are constants with $T_{cycle} - \sigma C_{slot} \geq 0$. The above test can be reorganized as 
{\small\begin{align}
&  \exists 0 < t \leq D_k \mbox{ s.t. }\nonumber\\
& C_k + \sigma\left(\ceiling{\frac{t}{T_{cycle}}} (\frac{T_{cycle}}{\sigma} - C_{slot}) \right) \nonumber\\
&+ \sum_{\tau_i \in hp(\tau_k)}\sigma \left(\ceiling{\frac{t}{T_i}} C_i + bC_i\right) \qquad\leq t.   \label{eq:test-bounded-TDMA-final-v1}
\end{align}}
The above test can be imagined as if there is a virtual
higher-priority task $\tau_{virtual}$ with period $T_{cycle}$ and execution
time $\frac{T_{cycle}}{\sigma}-C_{slot}$. In this formulation, the
virtual task $\tau_{virtual}$ does not have any inflation.  If $C_k - \sigma
\cdot b \cdot(\frac{T_{cycle}}{\sigma} - C_{slot}) > 0$, we can
further set $C_k'$ as $C_k - \sigma \cdot b
\cdot(\frac{T_{cycle}}{\sigma} - C_{slot})$, and the schedulability
test of task $\tau_k$ becomes
{\small\begin{align}
&  \exists 0 < t \leq D_k \mbox{ s.t. } \nonumber\\
&C_k' + \sigma\left(\ceiling{\frac{t}{T_{cycle}}}\cdot (\frac{T_{cycle}}{\sigma} - C_{slot}) + b (\frac{T_{cycle}}{\sigma} - C_{slot})\right)\nonumber\\
&\qquad+ \sum_{\tau_i \in hp(\tau_k)}\sigma \left(\ceiling{\frac{t}{T_i}} C_i + bC_i\right) \leq t.   \label{eq:test-bounded-TDMA-final-v2}
\end{align}}

Therefore, we have reformulated the test to the same case in
Eq.~\eqref{eq:test-constant-inflation} by adding a virtual
higher-priority task $\tau_{virtual}$. We can directly use
Theorem~\ref{thm:constant-inflation} for this class of schedulability
tests.

\subsubsection{Bounded delay service curve: $A(t)$ in  Eq.~\eqref{eq:service-curve-bounded-delay}}
\label{sec:bounded-service-detail}

For Eq.~\eqref{eq:test-bounded-service}, in which $A(t)$ is defined in
Eq.~\eqref{eq:service-curve-bounded-delay}, the schedulability test of task $\tau_k$ is as
follows:
{\small \begin{align}
&  \exists t_{delay} < t \leq D_k \mbox{ s.t. }\nonumber\\
& C_k + \sum_{\tau_i \in hp(\tau_k)}\sigma
 \left(\ceiling{\frac{t}{T_i}} C_i + bC_i\right) \leq \gamma
 (t-t_{delay}),   \label{eq:test-bounded-delay}
\end{align}}
where $\sigma > 0$, $b \geq 0$, $\gamma > 0$, and $0 < t_{delay} < D_k$ are
constants.  This can be rewritten as 
\begin{align}\small
&   \exists t_{delay} < t \leq D_k \mbox{ s.t. } \nonumber\\
& \frac{C_k+\gamma t_{delay}}{\gamma} + \sum_{\tau_i \in hp(\tau_k)}\frac{\sigma}{\gamma} \left(\ceiling{\frac{t}{T_i}} C_i + bC_i\right) \leq t.   \label{eq:test-bounded-delay2}
\end{align} 
It is also clear that for any $0 < t \leq t_{delay}$, the above inequality never holds when $C_k > 0$. Therefore, we can change the boundary condition from $t_{delay}< t$ to $0 < t$ safely. That is, we have
{\small \begin{align}
&  \exists 0 < t \leq D_k \mbox{ s.t. } \nonumber\\
& \frac{C_k+\gamma t_{delay}}{\gamma} + \sum_{\tau_i \in
  hp(\tau_k)}\frac{\sigma}{\gamma} \left(\ceiling{\frac{t}{T_i}} C_i +
  bC_i\right) \leq t.   \label{eq:test-bounded-delay3}
\end{align} }
With the above reformulation, the test is similar to that in Eq.~\eqref{eq:test-constant-inflation},
where $\sigma$ in Eq.~\eqref{eq:test-constant-inflation} is defined as $\frac{\sigma}{\gamma}$, and $C_k$ in Eq.~\eqref{eq:test-constant-inflation}  is defined as $\frac{C_k+\gamma t_{delay}}{\gamma}$.
Therefore, this case is now reduced to the same case in
Eq.~\eqref{eq:test-constant-inflation}. We can directly
use Theorem~\ref{thm:constant-inflation} for this class of
schedulability tests.

\subsubsection{Applications for TDMA}

Suppose that the system provides a time division multiple access
(TDMA) policy to serve an implicit-deadline sporadic task system with
a TDMA cycle $T_{cycle}$ and a slot length $C_{slot}$. The bandwidth
of the TDMA is $\gamma=\frac{C_{slot}}{T_{cycle}}$. As shown in
\cite{DBLP:conf/aspdac/WandelerT06}, the service provided by the TDMA
policy in an interval length $t$ is at least
$\max\{\floor{\frac{t}{T_{cycle}}}C_{slot}, t -
\ceiling{\frac{t}{T_{cycle}}}\cdot(T_{cycle}-C_{slot})\}$. The service
curve can still be lower-bounded by ignoring the term
$\floor{\frac{t}{T_{cycle}}}C_{slot}$, which leads to $t
-\ceiling{\frac{t}{T_{cycle}}}\cdot (T_{cycle} - C_{slot})$, as a
segmented service curve described in Eq.~\eqref{eq:service-curve-TDMA}.
Another way is to use a linear approximation \cite{WTVL06}, as a
bounded delay service curve in
Eq.~\eqref{eq:service-curve-bounded-delay}, to quantify the lower
bound on the service provided by the TDMA. It can be imagined that the
service starts when $t_{delay}=T_{cycle}-C_{slot}$ with utilization
$\gamma=T_{cycle}/C_{slot}$. Therefore, the service provided by the
TDMA in an interval length $t$ is lower bounded by $\max\{0,
t-t_{delay}+ \gamma\cdot(t-t_{delay}\}$.

These two different approximations and the original TDMA service curve
are all presented in Figure~\ref{fig:bounded-functions}. 
By adopting the segmented service curve,  the schedulability test for task $\tau_k$ can
be described by Eq.~\eqref{eq:test-bounded-TDMA} with $\sigma=1$ and
$b=0$. By the result in Sec.~\ref{sec:segmented}, we can
directly conclude that $0 < \alpha_i \leq 1$ and $0 < \beta_i \leq 1$
for $\tau_i \in hp(\tau_k)$ under RM scheduling, and, hence, the
schedulability test of task $\tau_k$ if $T_{cycle} < T_k$ is 
\begin{align}
&\left(\frac{T_{cycle} - C_{slot}}{T_{cycle}}+1\right)(U_k+1)\prod_{\tau_i \in
  hp(\tau_k)}(U_i+1) \leq 2 \nonumber\\
\Rightarrow &\prod_{i=1}^{k} (U_i+1) \leq \frac{2}{2-\gamma}.   \label{eq:tdma-uni-rm}
\end{align}
Therefore, if $T_{cycle} < T_k$, we can conclude that the
utilization bound is $\sum_{i=1}^{k} U_i \leq
k((\frac{2}{2-\gamma})^{\frac{1}{k}}-1)$. This bound is
identical to the result $\ln(\frac{2}{2-\gamma})$ presented by Sha
\cite{DBLP:conf/rtcsa/Sha03} when $k \rightarrow \infty$.

If $T_{cycle} \geq T_k$, the virtual task $\tau_{virtual}$ created in
Sec. \ref{sec:segmented} should be part of $hp_2(\tau_k)$
defined in Sec.~\ref{sec:constant-inflation}. Therefore, the
schedulability test of task $\tau_k$ if $T_{cycle} \geq T_k$ is
\begin{align}
&  (U_k+\frac{T_{cycle}-C_{slot}}{T_k}+1)\prod_{\tau_i \in
    hp(\tau_k)}(U_i+1) \leq 2 \nonumber\\
\Rightarrow & \prod_{i=1}^{k-1} (U_i+1) \leq
  \frac{2}{1+U_k + \frac{T_{cycle}}{T_k}(1-\gamma)}.
  \label{eq:tdma-uni-rm-cycle-large}
\end{align}
If $T_{cycle} \geq T_k$ when $k$, we can conclude that task $\tau_k$
is schedulable under RM scheduling if $\sum_{i=1}^{k-1} U_i \leq \ln(2) -  \ln(1+U_k + \frac{T_{cycle}}{T_k}(1-\gamma))$.

For the case with the bounded delay service curve, we can use Eq.~\eqref{eq:test-bounded-delay} with $t_{delay}=T_{cycle}-C_{slot}$, $\sigma^*=1$,
$b=0$, and $\gamma=T_{cycle}/C_{slot}$. This results in the following
schedulability test by using Corollary~\ref{cor:inflation} for RM scheduling
\begin{equation}
  \label{eq:bounded-delay-uni-rm}
  \left(\frac{C_k + \gamma t_{delay}}{\gamma T_k} + 1\right)
  \prod_{i=1}^{k-1} \left(\frac{U_i}{\gamma}+1\right) \leq 2.
\end{equation}

Therefore, if $\frac{t_{delay}}{T_k}$ is negligible, i.e., the TDMA cycle
is extremely shorter than $T_k$, then, we can conclude a utilization
bound of $\gamma \ln 2$, which dominates
$\ln(\frac{2}{2-\gamma})$. However, if $t_{delay}$ is very close to $T_k$,
then the test in Eq.~\eqref{eq:tdma-uni-rm} is better.

Note that the above treatment can be easily extended to handle
deferrable servers, sporadic servers, polling servers, and
constrained-deadline task systems. Extending the analysis to
multiprocessor systems is also possible if the schedulability test
can be written as Eq.~\eqref{eq:test-bounded-delay}.

\subsection{Arrival Jitter}
\label{sec:jitter}
Suppose that the schedulability test is as follows:
\begin{equation}
  \label{eq:test-arrival-jitter}
  \exists 0 < t \leq D_k \mbox{ s.t. } C_k + \sum_{\tau_i \in hp(\tau_k)}\sigma \left(\ceiling{\frac{t+ \delta T_i}{T_i}} C_i \right) \leq t,
\end{equation}
where $\sigma > 0$ and $\delta \geq 0$. Note that if $\delta$ is an
integer, then this is a special case of
Eq.~\eqref{eq:test-constant-inflation}. We will first focus on the cases when
$\delta$ is not an integer.  We again classify the task set
$hp(\tau_k)$ into two subsets:
\begin{itemize}
\item $hp_2(\tau_k)$ consists of the higher-priority tasks $\tau_i$
  with $\ceiling{\frac{D_k + \delta T_i}{T_i}}$  equal to $\ceiling{\delta}$.
\item $hp_1(\tau_k)$ is $hp(\tau_k)\setminus hp_2(\tau_k)$.
\end{itemize}
Therefore, we can rewrite Eq.~\eqref{eq:test-arrival-jitter} to 
\begin{equation}
  \label{eq:test-arrival-jitter2}
  \exists 0 < t \leq D_k \mbox{ s.t. } C_k' + \sum_{\tau_i \in hp_1(\tau_k)}\sigma \left(\ceiling{\frac{t+\delta T_i}{T_i}} C_i\right) \leq t,
\end{equation}
where $C_k'$ is defined as $C_k + \sum_{\tau_i \in hp_2(\tau_k)}\sigma \ceiling{\delta}C_i$.

\begin{theorem}
  \label{thm:arrival-jitter}
  For
  Eq.~\eqref{eq:test-arrival-jitter2}, the $k$-point effective
  schedulability test in Definition~\ref{def:kpoints-k2u} is with the
  following settings:
  \begin{compactitem}
  \item $t_k = D_k$,
  \item  for $\tau_i \in hp_1(\tau_k)$, set $g_i =  \floor{\frac{D_k+\delta T_i}{T_i}}$,
  \item  for $\tau_i \in hp_1(\tau_k)$, set $t_i =\left
      (\floor{\frac{D_k+\delta T_i}{T_i}}-\delta\right)T_i = (g_i-\delta) T_i$,
  \item  for $\tau_i \in hp_1(\tau_k)$,  the parameter $\alpha_i$ is $\frac{\sigma g_i}{g_i-\delta}$ with $0 < \alpha_i \leq \frac{\sigma\ceiling{\delta}}{\ceiling{\delta}-\delta}$, and 
  \item  for $\tau_i \in hp_1(\tau_k)$,  the parameter $\beta_i$ is $\frac{\sigma}{g_i-\delta}$ with $0 < \beta_i \leq \frac{\sigma}{\ceiling{\delta}-\delta}$.
  \end{compactitem}
  The tasks in $hp_1(\tau_k)$ are indexed according to non-decreasing
  $t_i$ defined above to satisfy Definition~\ref{def:kpoints-k2u}.
\end{theorem}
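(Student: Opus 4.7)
The proof mirrors the structure of Theorem~\ref{thm:constant-inflation}: we fix $t_j$ on the left-hand side of Eq.~\eqref{eq:test-arrival-jitter2}, bound each ceiling term from above depending on whether $i < j$ or $i \geq j$, and then rewrite the resulting expression in the form required by Definition~\ref{def:kpoints-k2u}. After indexing the tasks in $hp_1(\tau_k)$ by non-decreasing $t_i$, set $t_k = D_k$, $g_i = \floor{\frac{D_k + \delta T_i}{T_i}}$, and $t_i = (g_i - \delta) T_i$. By the definition of $hp_1(\tau_k)$, we have $\ceiling{\frac{D_k + \delta T_i}{T_i}} \geq \ceiling{\delta} + 1$, which forces $g_i \geq \ceiling{\delta}$; since $\delta$ is non-integer, $g_i - \delta \geq \ceiling{\delta} - \delta > 0$, so every $t_i$ is strictly positive as Definition~\ref{def:kpoints-k2u} demands.

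\textbf{Key ceiling bounds.} For $i \geq j$, the index ordering gives $t_j \leq t_i = (g_i - \delta) T_i$, so $\frac{t_j + \delta T_i}{T_i} \leq g_i$, which yields $\ceiling{\frac{t_j + \delta T_i}{T_i}} \leq g_i$. For $i < j$, we use $t_j \leq D_k$ to obtain $\frac{t_j + \delta T_i}{T_i} \leq \frac{D_k + \delta T_i}{T_i}$; since $g_i = \floor{\frac{D_k + \delta T_i}{T_i}}$, the ceiling is at most $g_i + 1$. Substituting these bounds and separating the common part from the extra $+1$ contributed by the first $j-1$ terms, the left-hand side of Eq.~\eqref{eq:test-arrival-jitter2} at $t = t_j$ is bounded by
\[
C_k' + \sum_{i=1}^{k-1} \sigma g_i C_i + \sum_{i=1}^{j-1} \sigma C_i.
\]
Using $C_i = U_i T_i = \frac{U_i t_i}{g_i - \delta}$ converts $\sigma g_i C_i$ into $\frac{\sigma g_i}{g_i-\delta} t_i U_i = \alpha_i t_i U_i$ and $\sigma C_i$ into $\frac{\sigma}{g_i-\delta} t_i U_i = \beta_i t_i U_i$, matching Eq.~\eqref{eq:precodition-schedulability-k2u} exactly (with $C_k'$ playing the role of $C_k$).

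\textbf{Parameter bounds and conclusion.} Positivity of $\alpha_i$ and $\beta_i$ follows from $g_i > \delta$. Because $g_i \geq \ceiling{\delta}$, the expressions $\frac{\sigma g_i}{g_i - \delta}$ and $\frac{\sigma}{g_i - \delta}$ are both decreasing in $g_i$, so they are maximized at $g_i = \ceiling{\delta}$, giving $\alpha_i \leq \frac{\sigma \ceiling{\delta}}{\ceiling{\delta} - \delta}$ and $\beta_i \leq \frac{\sigma}{\ceiling{\delta} - \delta}$, as claimed. The main obstacle is the asymmetry between the $i < j$ and $i \geq j$ ceiling bounds, where one must verify that $g_i = \floor{\frac{D_k + \delta T_i}{T_i}}$ (rather than the ceiling) gives the right ``slack'' so that both the $\alpha_i$ and $\beta_i$ coefficients land on a single test-point $t_i$ in $hp_1(\tau_k)$. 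A minor subtlety is the boundary case where $\frac{D_k + \delta T_i}{T_i}$ happens to be an integer; the bound $g_i + 1$ is still safe (merely loose), so the argument goes through uniformly.
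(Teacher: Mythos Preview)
Your proposal is correct and follows essentially the same approach as the paper's proof: both split the ceiling term according to whether $i<j$ or $i\geq j$, bound by $g_i+1$ and $g_i$ respectively, and then rewrite $C_i=\frac{t_iU_i}{g_i-\delta}$ to read off $\alpha_i$ and $\beta_i$; the monotonicity argument for the upper bounds on $\alpha_i,\beta_i$ is likewise identical. Your write-up is in fact slightly more explicit than the paper's (you justify $g_i\geq\ceiling{\delta}$ from the definition of $hp_1(\tau_k)$, verify $t_i>0$, and note the harmless integer boundary case), but there is no substantive difference in strategy.
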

\begin{proof}
  By the definition of $t_i$ and $hp_1(\tau_k)$, we know that $g_i$ is
  an integer with $g_i > \delta$. By following the same procedure in
  the proof of Theorem~\ref{thm:constant-inflation}, the left-hand side in
  Eq.~\eqref{eq:test-arrival-jitter2} at time $t=t_j$ is upper bounded by
  whether {\small \begin{align}
      &  C_k' + \sum_{i=1}^{k-1}\sigma \left(\ceiling{\frac{t_j + \delta T_i}{T_i}} C_i\right)\nonumber\\
      \leq \;\;& C_k' + \sum_{i=1}^{j-1}\sigma \left(\ceiling{\frac{D_k + \delta T_i}{T_i}} C_i \right) + \sum_{i=j}^{k-1}\sigma \left(\ceiling{\frac{t_i + \delta T_i}{T_i}} C_i\right)\nonumber\\
      \leq \;\;& C_k' + \sum_{i=1}^{j-1}\sigma (g_i+1) C_i + \sum_{i=j}^{k-1}\sigma g_i C_i\nonumber\\
      = \;\;& C_k' + \sum_{i=1}^{k-1}\sigma g_i C_i + \sum_{i=1}^{j-1} \sigma C_i \nonumber\\
      =_1 & C_k' + \sum_{i=1}^{k-1} \frac{\sigma g_i}{g_i-\delta} t_i
      U_i + \sum_{i=1}^{j-1} \frac{\sigma}{g_i-\delta}t_i U_i,
    \end{align}} where the last equality comes from the setting that
  $C_i = T_i U_i= \frac{1}{g_i-\delta} t_i U_i$.

  It is not difficult to see that $\frac{1}{g_i-\delta}$ and
  $\frac{g_i}{g_i-\delta}$ are both decreasing functions with respect
  to $g_i$ if $g_i > \delta$. Therefore, we know that $0 < \alpha_i
  \leq \frac{\sigma\ceiling{\delta}}{\ceiling{\delta}-\delta}$ and $0
  < \beta_i \leq \frac{\sigma}{\ceiling{\delta}-\delta}$ since $g_i$
  is an integer. We therefore conclude the proof.
\end{proof}

The above analysis may be improved by further annotating
$hp_1(\tau_k)$ to enforce $g_i > \delta+1$ if $\ceiling{\delta}$ is
very close to $\delta$. 

\begin{corollary}
  Suppose that we classify the task set $hp(\tau_k)$ into two subsets:
  \begin{compactitem}
  \item $hp_2(\tau_k)$ consists of the higher-priority tasks $\tau_i$
    with $\ceiling{\frac{D_k + \delta T_i}{T_i}}$ less than or equal to $\ceiling{\delta}+1$.
  \item $hp_1(\tau_k)$ is $hp(\tau_k)\setminus hp_2(\tau_k)$.
  \end{compactitem}
  Then, for each task $\tau_i \in hp_1(\tau_k)$, we have $0 < \alpha_i
  \leq \frac{\sigma(\ceiling{\delta}+1)}{\ceiling{\delta}+1-\delta}$
  and $0 < \beta_i \leq \frac{\sigma}{\ceiling{\delta}+1-\delta}$ for
  the schedulability test in Eq.~\eqref{eq:test-arrival-jitter2}, 
  where $C_k'$ is defined as $C_k + \sum_{\tau_i \in hp_2(\tau_k)}\sigma \ceiling{\frac{D_k+\delta T_i}{T_i}}C_i$.
\end{corollary}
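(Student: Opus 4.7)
The plan is to follow the blueprint of the proof of Theorem~\ref{thm:arrival-jitter} essentially verbatim, adjusting only the classification threshold from $\ceiling{\delta}$ to $\ceiling{\delta}+1$ and propagating the consequences through the same algebra.

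First I would dispose of the $hp_2(\tau_k)$ tasks: by monotonicity of the ceiling, every such task contributes at most $\sigma \ceiling{\frac{D_k + \delta T_i}{T_i}} C_i$ to the left-hand side of Eq.~\eqref{eq:test-arrival-jitter} at any $t \in (0,D_k]$, so absorbing these terms into $C_k'$ (as the corollary prescribes) safely yields a test of the shape of Eq.~\eqref{eq:test-arrival-jitter2} but restricted to $hp_1(\tau_k)$.

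Next I would verify the key arithmetic fact about $hp_1(\tau_k)$: since a task $\tau_i \in hp_1(\tau_k)$ satisfies $\ceiling{\frac{D_k+\delta T_i}{T_i}} \geq \ceiling{\delta}+2$, we have $\frac{D_k+\delta T_i}{T_i} > \ceiling{\delta}+1$, and hence $g_i = \floor{\frac{D_k+\delta T_i}{T_i}} \geq \ceiling{\delta}+1 > \delta$. With this stronger lower bound on $g_i$ in hand, I would reuse the chain of inequalities in the proof of Theorem~\ref{thm:arrival-jitter} unchanged, defining $t_i = (g_i-\delta)T_i$, obtaining $\alpha_i = \frac{\sigma g_i}{g_i-\delta}$ and $\beta_i = \frac{\sigma}{g_i - \delta}$, and again appealing to the identity $C_i = \frac{1}{g_i-\delta} t_i U_i$.

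Finally, since $\frac{g}{g-\delta}$ and $\frac{1}{g-\delta}$ are both strictly decreasing functions of $g$ on $(\delta,\infty)$, their suprema over the admissible integer values of $g_i$ are attained at the minimum $g_i = \ceiling{\delta}+1$, giving precisely the bounds $\alpha_i \leq \frac{\sigma(\ceiling{\delta}+1)}{\ceiling{\delta}+1-\delta}$ and $\beta_i \leq \frac{\sigma}{\ceiling{\delta}+1-\delta}$ claimed in the statement. The only non-mechanical step is the ceiling/floor conversion in the second paragraph; everything else is a direct transcription of the earlier theorem's proof with $\ceiling{\delta}$ replaced by $\ceiling{\delta}+1$, so I do not expect any genuine obstacle.
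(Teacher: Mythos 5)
Your proposal is correct and follows essentially the same route as the paper, which simply notes that the argument is identical to the proof of Theorem~\ref{thm:arrival-jitter} once one observes that the new classification forces $g_i \geq \ceiling{\delta}+1 > \delta+1$ for every $\tau_i \in hp_1(\tau_k)$, so that the decreasing functions $\frac{\sigma g}{g-\delta}$ and $\frac{\sigma}{g-\delta}$ are maximized at $g=\ceiling{\delta}+1$. Your ceiling-to-floor conversion and the absorption of the $hp_2(\tau_k)$ terms into $C_k'$ are exactly the (implicit) steps the paper relies on.
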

\begin{proof}
  This is identical to the proof of Theorem~\ref{thm:arrival-jitter} by using the fact $g_i > \delta+1$ for a task $\tau_i$ in $hp_1(\tau_k)$ defined in this corollary.
\end{proof}

The quantification of the arrival jitter in
Eq.~\eqref{eq:test-arrival-jitter} assumes an upper bounded jitter
$\delta T_i$ for each task $\tau_i \in hp(\tau_k)$. In many cases, the
higher-priority tasks have independent jitter terms. Putting the
arrival jitter of task $\tau_i$ to $\delta T_i$ is sometimes over
pessimistic. For the rest of this section, suppose that the
schedulability test is as follows:
\begin{equation}
  \label{eq:test-independent-jitter}
  \exists 0 < t \leq D_k \mbox{ s.t. } C_k + \sum_{\tau_i \in hp(\tau_k)}\sigma \left(\ceiling{\frac{t+ J_i}{T_i}} C_i \right) \leq t,
\end{equation}
where $\sigma > 0$ and $J_i \geq 0$ for every $\tau_i \in hp(\tau_k)$. We again classify the task set
$hp(\tau_k)$ into two subsets:
\begin{itemize}
\item $hp_2(\tau_k)$ consists of the higher-priority tasks $\tau_i$
  with $\ceiling{\frac{D_k + J_i}{T_i}}$ equal to $\ceiling{J_i/T_i}$.
\item $hp_1(\tau_k)$ is $hp(\tau_k)\setminus hp_2(\tau_k)$.
\end{itemize}
Therefore, we can rewrite Eq.~\eqref{eq:test-independent-jitter} to 
\begin{equation}
  \label{eq:test-independent-jitter2}
  \exists 0 < t \leq D_k \mbox{ s.t. } C_k' + \sum_{\tau_i \in hp_1(\tau_k)}\sigma \left(\ceiling{\frac{t+J_i}{T_i}} C_i\right) \leq t,
\end{equation}
where $C_k'$ is defined as $C_k + \sum_{\tau_i \in hp_2(\tau_k)}\sigma \ceiling{J_i/T_i}C_i$

\begin{theorem}
  \label{thm:independent-jitter}
  For each task $\tau_i$ in $hp_1(\tau_k)$ in
  Eq.~\eqref{eq:test-independent-jitter2}, the $k$-point effective
  schedulability test in Definition~\ref{def:kpoints-k2u} is with the
  following settings:
  \begin{compactitem}
  \item $t_k = D_k$,
  \item  for $\tau_i \in hp_1(\tau_k)$, set $g_i =  \floor{\frac{D_k+J_i}{T_i}}$,
  \item  for $\tau_i \in hp_1(\tau_k)$, set $t_i =\floor{\frac{D_k+J_i}{T_i}}T_i - J_i = g_i T_i - J_i$,
  \item  for $\tau_i \in hp_1(\tau_k)$,  the parameter $\alpha_i$ is $\frac{\sigma g_i}{g_i-J_i/T_i}$, and
  \item  for $\tau_i \in hp_1(\tau_k)$,  the parameter $\beta_i$ is $\frac{\sigma}{g_i- J_i/T_i}$.
  \end{compactitem}
  The tasks in $hp_1(\tau_k)$ are indexed according to non-decreasing
  $t_i$ defined above to satisfy Definition~\ref{def:kpoints-k2u}.
\end{theorem}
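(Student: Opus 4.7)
The plan is to mirror the proof of Theorem~\ref{thm:arrival-jitter} almost verbatim, replacing the uniform jitter term $\delta T_i$ by the task-specific jitter $J_i$. The goal is to show that, after choosing $t_i$, $\alpha_i$, $\beta_i$ as stated, the left-hand side of Eq.~\eqref{eq:test-independent-jitter2} evaluated at $t=t_j$ is upper-bounded by an expression of exactly the form required by Definition~\ref{def:kpoints-k2u}.

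First I would check that $t_i = g_i T_i - J_i$ is strictly positive for every $\tau_i \in hp_1(\tau_k)$. By the definition of $hp_1(\tau_k)$ we have $\ceiling{(D_k+J_i)/T_i} > \ceiling{J_i/T_i}$, which forces $(D_k+J_i)/T_i > \ceiling{J_i/T_i} \geq J_i/T_i$, hence $g_i = \floor{(D_k+J_i)/T_i} \geq \ceiling{J_i/T_i}$ and $g_i T_i > J_i$. This legitimizes the index rule: reorder the tasks in $hp_1(\tau_k)$ so that $0 < t_1 \leq t_2 \leq \cdots \leq t_{k-1} \leq D_k = t_k$.

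Next, at $t = t_j$ I bound the ceiling terms in two regimes. For $i < j$, monotonicity of $\ceiling{\cdot}$ together with $t_j \leq D_k$ gives $\ceiling{(t_j + J_i)/T_i} \leq \ceiling{(D_k + J_i)/T_i} \leq g_i + 1$, using the basic inequality $\ceiling{x} \leq \floor{x} + 1$. For $i \geq j$, the choice $t_i = g_i T_i - J_i$ makes $\ceiling{(t_i + J_i)/T_i} = \ceiling{g_i} = g_i$, and combined with $t_j \leq t_i$ we get $\ceiling{(t_j+J_i)/T_i} \leq g_i$. Summing as in Theorem~\ref{thm:arrival-jitter}, the LHS of Eq.~\eqref{eq:test-independent-jitter2} at $t=t_j$ is at most
\begin{equation*}
C_k' + \sum_{i=1}^{k-1} \sigma g_i C_i + \sum_{i=1}^{j-1} \sigma C_i.
\end{equation*}

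Finally I would apply the identity $C_i = T_i U_i = \frac{t_i U_i}{g_i - J_i/T_i}$, which holds because $t_i = (g_i - J_i/T_i) T_i$. Substituting rewrites the above bound as $C_k' + \sum_{i=1}^{k-1} \alpha_i t_i U_i + \sum_{i=1}^{j-1} \beta_i t_i U_i$ with precisely the announced $\alpha_i$ and $\beta_i$, matching Definition~\ref{def:kpoints-k2u}. The only delicate step is the first one — showing $t_i > 0$ for every $\tau_i \in hp_1(\tau_k)$, including boundary cases where $J_i/T_i$ is an integer — since everything afterwards is a direct algebraic transcription of the argument used for Theorem~\ref{thm:arrival-jitter}; unlike that theorem, however, no clean uniform upper bound on $\alpha_i$ and $\beta_i$ is available because $g_i - J_i/T_i$ depends on the individual jitter $J_i$, so the theorem statement correctly omits such bounds.
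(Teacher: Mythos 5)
Your proof is correct and follows exactly the route the paper takes: the paper's own proof of this theorem is literally the single sentence that it is identical to the proof of Theorem~\ref{thm:arrival-jitter}, and your argument is that proof transcribed with $\delta T_i$ replaced by $J_i$ (same two-regime bound on the ceiling terms at $t=t_j$, same substitution $C_i = t_i U_i/(g_i - J_i/T_i)$). The one caveat you flag --- strict positivity of $t_i$ when $J_i/T_i$ is an integer and $D_k < T_i$, where the stated $hp_1$/$hp_2$ classification only yields $g_i T_i \geq J_i$ rather than $g_i T_i > J_i$ --- is a genuine corner case, but it afflicts the paper's treatment equally and does not separate your argument from theirs.
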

\begin{proof}
  The proof is identical to that of Theorem~\ref{thm:arrival-jitter}.
\end{proof}

{\bf Applications:} Arrival jitter is very common in task systems,
especially when no critical instant theorem has been
established. Therefore, instead of exploring all the combinations of
the arrival times of the higher-priority tasks, quantifying the
scheduling penalty with a jitter term is a common approach. For example,
in a self-suspending constrained-deadline sporadic task system, we can
quantify the arrival jitter $J_i$ of the higher-priority task $\tau_i
\in hp(\tau_k)$ as $D_i-C_i$ by assuming that $\tau_i$ meets its
deadline, e.g., \cite{huangpass:dac2015}. Suppose that $S_i$ is the
self-suspension time of a task $\tau_i$. For a self-suspending
implicit-deadline task system under fixed-priority scheduling, it is
shown in \cite{huangpass:dac2015} that the schedulability test is to
verify
\begin{equation}
  \label{eq:test-suspending-jitter}
  \exists 0 < t \leq T_k \mbox{ s.t. } C_k+S_k + \sum_{\tau_i \in hp(\tau_k)} \left(\ceiling{\frac{t+ T_i-C_i}{T_i}} C_i \right) \leq t. 
\end{equation}
That is, $\sigma=1$ and $J_i$ is $T_i - C_i$ in Eq.~\eqref{eq:test-independent-jitter}. Therefore, we can use
Theorem~\ref{thm:independent-jitter} to construct a polynomial-time
schedulability test.

\section{Conclusion}
\label{sec:conclusion}

This report explains how to automatically derive the parameters needed
in the \frameworkku{} framework for several classes of widely used
schedulability tests. The procedure to derive the parameters was not
clear yet when we developed the \frameworkku{} framework in
\cite{DBLP:journals/corr/abs-1501.07084}.  Therefore, the parameters
in all the examples in \cite{DBLP:journals/corr/abs-1501.07084} were
manually constructed. This automation procedure
significantly empowers the \frameworkku{} framework to automatically
handle a wide range of classes of real-time execution platforms and
task models, including uniprocessor scheduling, multiprocessor
scheduling, self-suspending task systems, real-time tasks with arrival
jitter, services and virtualizations with bounded delays, etc.

Moreover, we would also like to emphasize that the constructions of
the coefficients in this report may not be the best choices. We do not
provide any optimality guarantee of the resulting constructions. In
fact, given an arbitrary schedulability test, there are many ways to
define a corresponding k-point effective schedulability test in
Definition~\ref{def:kpoints-k2u}. All the constructions in this report
follow the same design philosophy: \emph{We first identify the tasks
  that can release at least one more job at time $0 < t < D_k$ in the
  schedulability test and define the effective test point of such a
  task at its last release before $D_k$.} There may be other more
effective constructions for different schedulability tests. These
opportunities are not explored in this report.

\vspace{0.2in}
{\bf Acknowledgement}: This report has been supported by DFG, as
    part of the Collaborative Research Center SFB876
    (http://sfb876.tu-dortmund.de/), the priority program
    "Dependable Embedded Systems" (SPP 1500 -
    http://spp1500.itec.kit.edu), and NSF grants OISE 1427824 and CNS 1527727.  The authors would also like to
    thank to Mr. Niklas Ueter for his valuable feedback in the draft.

 \footnotesize
\bibliographystyle{abbrv} 
\bibliography{ref,real-time} 
\normalsize
\end{document}